\newcommand{\rrho}{\boldsymbol{\rho}}
\newcommand{\llambda}{\boldsymbol{\lambda}}
\newcommand{\Amat}{\mathbf{A}}
\newcommand{\Bmat}{\mathbf{B}}
\newcommand{\Mmat}{\mathbf{M}}
\newcommand{\Hmat}{\mathbf{H}}
\newcommand{\Pmat}{\mathbf{P}}
\newcommand{\blmat}{\mathbf{b}}
\newcommand{\Dmat}{\mathbf{D}}
\newcommand{\Imat}{\mathbb{1}}
\newcommand{\Umat}{\mathbf{U}}
\newcommand{\Sblmat}{\mathbf{S}}
\DeclareMathOperator{\ssin}{s}
\DeclareMathOperator{\ccos}{c}
\DeclareMathOperator\Ln{Ln}
\newcommand{\SUN}[1]{\ensuremath{\text{SU}({#1})}}
\newcommand{\UN}[1]{\ensuremath{\text{U}({#1})}}
\newcommand{\sun}[1]{\ensuremath{\mathfrak{su}({#1})}}
\newcommand{\un}[1]{\ensuremath{\mathfrak{u}({#1})}}
\newtheorem{theorem}{Theorem}[section]
\newcounter{example}[section]
\title{Geometric invariant decomposition of \SUN{3}}
\author{Martin Roelfs$^a$\thanks{\href{mailto:martin.roelfs@kuleuven.be}{martin.roelfs@kuleuven.be}} \thanks{The research of Martin Roelfs~is supported by KU Leuven IF project C14/16/067.}}
\affil{\footnotesize $^{a}$ KU Leuven Campus Kortrijk--Kulak, Department of Physics, Etienne Sabbelaan 53 bus 7657, 8500 Kortrijk, Belgium}
\date{}
\begin{document}

\maketitle

\begin{abstract}
    A novel invariant decomposition of diagonalizable $n \times n$ matrices into $n$ commuting matrices is presented. This decomposition is subsequently used to split the fundamental representation of \sun{3} 
    Lie algebra elements into at most three commuting elements of \un{3}. 
    As a result, the exponential of an \sun{3} Lie algebra element can be split into three commuting generalized Euler's formulas, or conversely, a Lie group element can be factorized into at most three generalized Euler's formulas. After the factorization has been performed, the logarithm follows immediately.
\end{abstract}

\section{Introduction}

The aim of this paper is to identify the quantities left invariant by a given \SUN{3} transformation, and to describe the role these quantities play as the generators of the transformation.
Consider a traceless skew-Hermitian $3 \times 3$ matrix $\Bmat$; an element of the Lie algebra $\sun{3}$. 
We will demonstrate that such a matrix can be decomposed into at most three commuting matrices $\blmat_i \in \un{3}$:
    \begin{equation}
        \Bmat = \blmat_1 + \blmat_2 + \blmat_3.
        \label{eq:bivector_split_mat}
    \end{equation}
The $\blmat_i$ are said to be \emph{simple} because $\blmat_i^2 = \lambda_i \Imat$, where $\lambda_i \in \mathbb{R}$, $\lambda_i \leq 0$. Defining $\beta_i := \sqrt{-\lambda_i}$, it is easily verified that $\hat{\blmat}_i := \blmat_i / \beta_i$ squares to $-\Imat$. 
Therefore each $\blmat_i$ can be normalized to behave like an imaginary unit.

Because the \SUN{3} element corresponding to $\Bmat$ is $\Umat = \exp\bqty{\Bmat}$, it follows from the commutativity of the $\blmat_i$ that the exponential can be split into the product of three generalized Euler's formulas:
    \begin{align}
        \Umat &= e^{\Bmat} = e^{\blmat_1}e^{\blmat_2}e^{\blmat_3} \\
        &= \prod_{i=1}^3 \bqty{\Imat \cos \beta_i + \hat{\blmat}_i \sin \beta_i},
    \end{align}
where $e^{\blmat_i} \in \UN{3}$, but the product $e^{\blmat_1}e^{\blmat_2}e^{\blmat_3} \in \SUN{3}$.
As each $\blmat_i$ is invariant under the transformation $\Umat \blmat_i \Umat^\dagger$, the decomposition of  \cref{eq:bivector_split_mat} is called the \emph{invariant decomposition} of $\Bmat$.

The logarithm of $\Umat_i = \exp\bqty{\blmat_i}$ is not unique, in the same way that the complex logarithm is not unique \cite{ablowitz_fokas_2003}. Following a similar strategy to complex analysis, we first define a \emph{principal logarithm}, for which $0 \leq \beta_i \leq \pi$.
To this end, let us define
    \begin{align}
        \ccos(\blmat_i) &:= \frac{1}{2} \pqty{\Umat_i + \Umat_i^\dagger} = \Imat \cos \beta_i  \\
        \ssin(\blmat_i) &:= \frac{1}{2} \pqty{\Umat_i - \Umat_i^\dagger} = \hat{\blmat}_i \sin \beta_i. \label{eq:ssin_ccos_simple}
    \end{align}
As $0 \leq \beta_i \leq \pi$ implies $0 \leq \sin \beta_i \leq 1$, only the cosine function has to be inverted to obtain the principal logarithm:
    \begin{align}
        \Ln{\Umat_i} &= \widehat{\ssin(\blmat_i)} \arccos(\frac{1}{3} \tr\bqty{\ccos(\blmat_i)}). \label{eq:logarithm_simple} \\
        &= \hat{\blmat}_i \arccos(\cos \beta_i).
    \end{align}
Because the sign information of $\ssin(\blmat_i)$ is not carried by $\sin \beta_i$, but rather by $\hat{\blmat}_i$, a full $2 \pi$ range for the principal logarithm is maintained.
Consequently, by factoring a group element $\Umat$ into $\Umat_1 \Umat_2 \Umat_3$, a principal logarithm for $\Umat$ follows directly, as
    \begin{equation}
        \Ln(\Umat)= \Ln(\Umat_1) + \Ln(\Umat_2) + \Ln(\Umat_3).
    \end{equation}
Closed forms for the exponential function of \sun{3} elements have been published before, see e.g. \cite{Curtright:2015iba,doi:10.1063/1.4938418}. However, the invariant decomposition of \cref{eq:bivector_split_mat} presents an intuitive approach, which is also easy to invert to give a closed form logarithm for \SUN{3} elements. 
Additionally, the invariant decomposition of \cref{eq:bivector_split_mat} gives the invariants $\blmat_i$ a very strong geometric interpretation, as the invariants of the transformations they generate.

This paper is organized as follows. Firstly, in \cref{sec:bivector_split_mat} we define the invariant decomposition of diagonalizable $n \times n$ matrices. Secondly, in \cref{sec:exponential_mat} we define the exponential function of \SUN{3} elements using the invariant decomposition. Thirdly, in \cref{sec:factorization_mat} we describe how the factorization $\Umat = \Umat_1 \Umat_2 \Umat_3$ is performed. Fourthly, in \cref{sec:logarithm_mat} we describe the logarithm of $\Umat \in \SUN{3}$. 
Lastly, we apply the invariant decomposition to the Gell-Mann matrices in \cref{sec:gellmann}.

\section{Methods}

\subsection{Invariant decomposition}\label{sec:bivector_split_mat}

We start by defining the invariant decomposition for $3 \times 3$ diagonalizable matrices in \cref{th:orthogonal_decomposition}.
After this, the $n \times n$ case follows directly.

\begin{theorem}\label{th:orthogonal_decomposition}
    A $3 \times 3$ diagonalizable matrix $\Bmat$, can be decomposed into at most three commuting normal matrices $\blmat_i$, satisfying $\blmat_i^2 = \lambda_i \Imat$. Here $\lambda_i \in \mathbb{R}$, $\lambda_i \leq 0$.
\end{theorem}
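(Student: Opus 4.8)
The plan is to use the spectral theory of $\Bmat$. Since $\Bmat$ is diagonalizable it has a spectral decomposition
\[
    \Bmat = \sum_j \mu_j \Pmat_j, \qquad \Pmat_j \Pmat_k = \delta_{jk}\Pmat_j, \qquad \sum_j \Pmat_j = \Imat,
\]
into its (at most three) distinct eigenvalues $\mu_j$ and the associated Lagrange--Sylvester eigenprojectors $\Pmat_j$, each of which is a polynomial in $\Bmat$. I would then look for the $\blmat_i$ among the polynomials in $\Bmat$, i.e.\ make the ansatz $\blmat_i = \sum_j a_{ij}\Pmat_j$. Any two such matrices commute, they commute with $\Bmat$ (so that the invariance $\Umat \blmat_i \Umat^\dagger = \blmat_i$ under $\Umat = e^{\Bmat}$ is automatic), and --- because $\Bmat$, being normal (e.g.\ skew-Hermitian, the case of interest here), has orthogonal eigenprojectors --- every such combination is itself normal. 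Since $\blmat_i^2 = \sum_j a_{ij}^2 \Pmat_j$, the requirement $\blmat_i^2 = \lambda_i\Imat$ is equivalent to $a_{ij}^2 = \lambda_i$ for every $j$, i.e.\ the $i$-th ``row'' $(a_{i1},a_{i2},a_{i3})$ has entries of constant modulus. Writing $\mu_j = \mathrm{i}\nu_j$ with $\nu_j \in \mathbb{R}$ (legitimate for a skew-Hermitian $\Bmat$, whose spectrum is purely imaginary) and setting $\beta_i := \sqrt{-\lambda_i}$, the theorem reduces to the elementary claim that the real triple $(\nu_1,\nu_2,\nu_3)$ can be written as $\nu_j = \sum_{i=1}^3 S_{ji}\,\beta_i$ for some signs $S_{ji}\in\{+1,-1\}$ and some reals $\beta_i$.

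To settle this, fix once and for all an invertible $3\times3$ matrix $S$ with entries in $\{+1,-1\}$, and put $\boldsymbol{\beta} := S^{-1}\boldsymbol{\nu}$ with $\boldsymbol{\nu} = (\nu_1,\nu_2,\nu_3)$. If some $\beta_i$ comes out negative, flip the sign of the $i$-th column of $S$: this keeps $S$ a sign matrix while leaving $\lambda_i = -\beta_i^2$ unchanged. This yields $\blmat_i = \mathrm{i}\beta_i\sum_j S_{ji}\Pmat_j$, which by construction are mutually commuting, normal, satisfy $\blmat_i^2 = -\beta_i^2\Imat$ with $-\beta_i^2 \le 0$, and sum to $\Bmat$. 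Whenever a $\beta_i$ vanishes the corresponding $\blmat_i$ drops out, and if $\Bmat$ has a repeated eigenvalue there are fewer than three projectors to begin with; together these account for the phrase ``at most three''. For concreteness, one admissible choice --- the sign pattern in which $\blmat_1$ acts as the same imaginary unit on all three eigenspaces while $\blmat_2$ and $\blmat_3$ each single out one of them --- gives $\beta_1 = \tfrac12(\nu_1+\nu_2)$, $\beta_2 = \tfrac12(\nu_3-\nu_1)$, $\beta_3 = \tfrac12(\nu_3-\nu_2)$, up to the column flips just described.

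I expect the only real content --- hence the main obstacle --- to be this combinatorial/linear step: one must exhibit a single sign pattern that works simultaneously for all three eigenvalues, which amounts to the existence of an invertible $\pm1$ matrix (true in dimension three, and --- relevant for the $n\times n$ generalization announced after the statement --- true for every $n$), and one must track eigenvalue multiplicities so that the count ``at most three'' is honest rather than ``exactly three''. By contrast the algebraic bookkeeping is routine: that real combinations of the orthogonal projectors $\Pmat_j$ commute, are normal, and square to a scalar precisely under the constant-modulus condition is immediate, and the invariance of each $\blmat_i$ under conjugation by $\Umat$ is free because $\blmat_i$ is a polynomial in $\Bmat$ and hence commutes with $\Umat = \exp\Bmat$.
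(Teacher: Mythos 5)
Your proof is correct and follows essentially the same route as the paper: expand $\Bmat$ in its eigenbasis and distribute the (purely imaginary) eigenvalues over $\pm1$ sign patterns with coefficients of constant modulus, the paper's explicit ansatz $\Pmat^{-1}\blmat_i\Pmat = \tfrac12(\alpha_i - \tr[\Bmat])\,\mathrm{diag}(-1,\ldots,+1,\ldots,-1)$ being precisely the special case of your scheme with the invertible sign matrix $S_{ji}=2\delta_{ij}-1$. Your eigenprojector formulation with an arbitrary invertible sign matrix is a mild generalization rather than a different argument, and your explicit restriction to skew-Hermitian $\Bmat$ makes honest the point (implicit in the paper, whose $\lambda_i=\tfrac14(\alpha_i-\tr[\Bmat])^2$ is real and nonpositive only for purely imaginary spectrum) that the claims $\lambda_i\in\mathbb{R}$, $\lambda_i\le0$ and normality of the $\blmat_i$ require $\Bmat$ to be normal with imaginary eigenvalues, not merely diagonalizable.
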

\begin{proof}
    As the $3 \times 3$ matrix $\Bmat$ is diagonalizable, it can be written as $\Bmat = \Pmat \Dmat \Pmat^{-1}$, where $\Dmat = \text{diag}(\alpha_1, \alpha_2, \alpha_3)$ is a diagonal matrix whose diagonal entries are the eigenvalues $\alpha_i \in \mathbb{C}$.
    Assuming a decomposition $\Bmat = \sum_{i=1}^3 \blmat_i$ into commuting $\blmat_i$ exists, it follows that all $\blmat_i$ would be simultaneously diagonalizable, and thus
        \begin{equation}
            \Dmat = \Pmat^{-1} \blmat_1 \Pmat + \Pmat^{-1} \blmat_2 \Pmat + \Pmat^{-1} \blmat_3 \Pmat.
        \end{equation}
    We then make the ansatz
    \begin{align}
        \Pmat^{-1} \blmat_1 \Pmat &= \tfrac{1}{2} (\alpha_1 - \tr\bqty{\Bmat}) \, \text{diag}( +1, -1, -1 ) \label{eq:diagonalization} \\
        \Pmat^{-1} \blmat_2 \Pmat &= \tfrac{1}{2} (\alpha_2 - \tr\bqty{\Bmat}) \, \text{diag}( -1, +1, -1 ) \notag \\
        \Pmat^{-1} \blmat_3 \Pmat &= \tfrac{1}{2} (\alpha_3 - \tr\bqty{\Bmat}) \, \text{diag}( -1, -1, +1 ) \notag,
    \end{align}
    which satisfies $\Bmat = \sum_{i=1}^3 \blmat_i$, $\blmat_i^2 = \tfrac{1}{4}(\alpha_i - \tr\bqty{\Bmat})^2 \Imat$, and $\comm{\blmat_i}{\blmat_j}~=~0$. Therefore the sought-after decomposition has been found.
    When the eigenvalues $\alpha_i$ are degenerate, this decomposition is no longer unique. 
\end{proof}
\noindent
We notice that a decomposition of this type exists for any diagonalizable $n \times n$ matrix $\Bmat$. If $n \geq 3$:
    \begin{align}
        \Pmat^{-1} \blmat_1 \Pmat &= \tfrac{1}{2} (\alpha_1 - \tfrac{1}{n - 2} \tr\bqty{\Bmat}) \, \text{diag}( +1, -1, \ldots, -1, -1 ) \label{eq:diagonalization_nxn} \\
        \Pmat^{-1} \blmat_2 \Pmat &= \tfrac{1}{2} (\alpha_2 - \tfrac{1}{n - 2} \tr\bqty{\Bmat}) \, \text{diag}( -1, +1, \ldots, -1 -1 ) \notag \\
        \vdots \notag \\
        \Pmat^{-1} \blmat_n \Pmat &= \tfrac{1}{2} (\alpha_n - \tfrac{1}{n - 2} \tr\bqty{\Bmat}) \, \text{diag}( -1, -1, \ldots, -1, +1 ) \notag.
    \end{align}
The proof follows along the same lines as that of  \cref{th:orthogonal_decomposition}. To investigate if all the properties of the invariant decomposition discussed in this paper are also valid for $n > 3$ is outside the scope of this paper, and will be the topic of future research.

Having established that any diagonalizable $3 \times 3$ matrix $\Bmat$ can be split into at most three commuting matrices $\blmat_i$, we remark that when $\Bmat$ is traceless and skew-Hermitian, and thus $\Bmat \in \sun{3}$, the values of $\lambda_i$ can alternatively be calculated as the roots of
    \begin{align}
        \mathbb{0} &= \pqty{\blmat_1^2 - \lambda_i \Imat }\pqty{\blmat_2^2 - \lambda_i \Imat }\pqty{\blmat_3^2 - \lambda_i \Imat } \\
        \implies 0 &= - \lambda_i^3 + \frac{1}{4} \tr\bqty{\Bmat^2} \lambda_i^2 - \frac{1}{4} \pqty{\frac{1}{4} \tr\bqty{\Bmat^2}}^2 \lambda_i + \bqty{\frac{\det(\Bmat)}{8}}^2.\label{eq:roots}
    \end{align}
When all $\lambda_i$ are distinct, the $\blmat_i$ are found by solving
    \begin{equation}
        \blmat_i = \bqty{\Bmat + \frac{1}{8 \lambda_i} \Imat \det\pqty{\Bmat}} \bqty{\Imat + \frac{1}{2 \lambda_i} \pqty{\Bmat^2 - \frac{1}{4} \Imat \tr\pqty{\Bmat^2}}}^{-1}.
        \label{eq:simple_u3}
    \end{equation}
Thus, when all $\lambda_i$ are distinct, the invariant decomposition can be performed without performing diagonalization, but at the cost of an inverse. \Cref{eq:simple_u3} is the matrix representation of the orthogonal decomposition of bivectors, given in \emph{Clifford algebra to geometric calculus} \cite{GeometricCalculus}.
However, the invariant decomposition of \cref{th:orthogonal_decomposition} can always be performed, even when the $\lambda_i$ are degenerate.

\subsection{Exponential of an \sun{3} element}\label{sec:exponential_mat}

Since elements of \SUN{3} can be written as $\exp\bqty{\Bmat}$, with $\Bmat \in \sun{3}$, we would like a simple and intuitive way to compute exponentials of $\sun{3}$ elements.
The invariant decomposition of \cref{th:orthogonal_decomposition} provides such a method, since
the exponential of $\Bmat \in \sun{3}$ follows straightforwardly after performing the decomposition of $\Bmat$ into $\{ \blmat_i \}$.
Define $\beta_i := \sqrt{-\lambda_i}$, where $\lambda_i = \frac{1}{3} \tr\bqty{\blmat_i^2}$.
Then an \SUN{3} group element $\Umat = \exp\bqty{\Bmat}$ becomes
    \begin{align}
        \Umat &= e^{\Bmat} = e^{\blmat_1} e^{\blmat_2} e^{\blmat_3} \\
        &= \prod_{i=1}^3 \bqty{\ccos(\blmat_i) + \ssin(\blmat_i)} \notag \\
        &= \prod_{i=1}^3 \bqty{\cos(\beta_i) \Imat + \frac{\blmat_i}{\beta_i} \sin(\beta_i)},\notag
        \label{eq:group_element_mat}
    \end{align}
where $\ccos(\blmat_i)$ and $\ccos(\blmat_i)$ were previously defined in \cref{eq:ssin_ccos_simple}.
We can also form a family of elements generated by the $\blmat_i$:
    \begin{equation}
        \Umat(\theta_1, \theta_2, \theta_3) = e^{\theta_1 \blmat_1}e^{\theta_2 \blmat_2}e^{\theta_3 \blmat_3}.
    \end{equation}
Each $\blmat_i$ is an invariant of $\Umat(\theta_1, \theta_2, \theta_3) \in \UN{3}$, as due to the commutativity of the $\blmat_i$,
    \begin{equation}
        \Umat(\theta_1, \theta_2, \theta_3) \blmat_i \Umat(\theta_1, \theta_2, \theta_3)^\dagger = \blmat_i.
    \end{equation}
Therefore, $\Bmat$ is invariant under the entire family of transformations $\Umat(\theta_1, \theta_2, \theta_3)$
    \begin{equation}
        \Umat(\theta_1, \theta_2, \theta_3) \Bmat \Umat(\theta_1, \theta_2, \theta_3)^\dagger = \Bmat,
    \end{equation}
and so is any other linear combination of the $\blmat_i$, i.e.
    \begin{equation}
        \Amat(A_1, A_2, A_3) = A_1 \blmat_1 + A_2 \blmat_2 + A_3 \blmat_3.
    \end{equation}
Given the invariant decomposition of $\Bmat$, three parameters $\theta_i$ determine the group elements $\Umat(\theta_1, \theta_2, \theta_3)$ which leave $\Bmat$ invariant, and $\{ A_i \blmat_i \}$ spans the three parameter invariant subspace of $\Umat(\theta_1, \theta_2, \theta_3)$. For $\Umat(\theta_1, \theta_2, \theta_3)$ to be an element of \SUN{3} it needs to satisfy the constraint $\tr \bqty{ \theta_1 \blmat_1 + \theta_2 \blmat_2 + \theta_3 \blmat_3} = 0$; and for $\Amat(A_1, A_2, A_3)$ to be an element of \sun{3} it has to satisfy $\tr \Amat = 0$. Therefore there are only two degrees of freedom in these scenarios.

\subsection{Factorization of \SUN{3} element}\label{sec:factorization_mat}

From the exponential map of \cref{sec:exponential_mat}, we know that $\Umat \in \SUN{3}$ can be written as $\Umat_1 \Umat_2 \Umat_3$, where $\Umat_i = \exp\bqty{\blmat_i}$ and $\blmat_i$ is \emph{simple}. 
So given $\Umat \in \SUN{3}$, how do we find the $\Umat_i$? Splitting $\Umat(\Bmat) = e^{\Bmat}$ into cosine and sine, gives
    \begin{align}
        \ccos(\Bmat) &:= \frac{1}{2} \bqty{\Umat + \Umat^\dagger} \\
        \ssin(\Bmat) &:= \frac{1}{2} \bqty{\Umat - \Umat^\dagger}.
    \end{align}
Using $\ccos(\Bmat)$ and $\ssin(\Bmat)$, the \emph{grades} of $\Umat$ are defined as
    \begin{alignat}{3}
        \expval{\Umat}_0 &:= \ccos(\blmat_1)\ccos(\blmat_2)\ccos(\blmat_3) &&= \tfrac{1}{4} \Imat + \tfrac{1}{4} \tr\bqty{\ccos(\Bmat)} \Imat \label{eq:mat_grade0} \\
        \expval{\Umat}_2 &:= \sum_{i=1}^3 \ssin(\blmat_i) \prod_{j \neq i} \ccos(\blmat_j) &&= \ssin(\Bmat) - \expval{\Umat}_6 \label{eq:mat_grade2} \\
        \expval{\Umat}_4 &:= \sum_{i=1}^3 \ccos(\blmat_i) \prod_{j \neq i} \ssin(\blmat_j) &&= \ccos(\Bmat) - \expval{\Umat}_0 \label{eq:mat_grade4} \\
        \expval{\Umat}_6 &:= \ssin(\blmat_1)\ssin(\blmat_2)\ssin(\blmat_3) &&= \tfrac{1}{4} \tr\bqty{\ssin(\Bmat)} \Imat \label{eq:mat_grade6}.
    \end{alignat}
It is important to note that $\expval{\Umat}_2$ is not traceless, though it closely resembles the traceless projection commonly used in lattice Quantum Chromodynamics~\cite{doi:10.1142/6065,Mandula:1987rh,Giusti:2001xf}:
    \begin{equation}
        \ssin(\Bmat)\big\vert_\text{traceless} = \ssin(\Bmat) - \tfrac{1}{3} \tr \bqty{ \ssin(\Bmat) } \Imat.
        \label{eq:traceless_proj}
    \end{equation}
However, as $\tr\Bmat = 0$, it follows that in general $\tr \expval{\Umat}_2 \neq 0$: it contains a contribution proportional to the diagonal \un{3} generator $i \Imat$.
But because the generator $i \Imat$ is identical to the matrix representation of the pseudoscalar $i \Imat$, it is overzealous to discard the entire trace: only the part corresponding to the pseudoscalar $\expval{\Umat}_6$ has to be subtracted to obtain $\expval{\Umat}_2$.

An invariant decomposition of $\Amat = \expval{\Umat}_2 + \expval{\Umat}_4$ results in complex eigenvalues $\alpha_i=\gamma_i + i \delta_i$, where plugging only the real part $\gamma_i$ in \cref{eq:diagonalization} yields the Hermitian matrices
    \begin{equation}
        \Hmat_i = \ccos(\blmat_i) \prod_{j \neq i} \ssin(\blmat_j),
    \end{equation}
while the imaginary part $i \delta_i$ yields the skew-Hermitian matrices
    \begin{equation}
        \Sblmat_i = \ssin(\blmat_i) \prod_{j \neq i} \ccos(\blmat_j).
    \end{equation}
This yields a decomposition of $\Umat$ into its $8$ invariants:
    \begin{table}[H]
        \centering
        \begin{tabular}{c c c c}
             & ${\color{red}\ssin(\blmat_1)}{\color{blue}\ccos(\blmat_2)\ccos(\blmat_3)}$ & ${\color{blue}\ccos(\blmat_1)}{\color{red}\ssin(\blmat_2)\ssin(\blmat_3)}$ \\
             \\
            ${\color{blue}\ccos(\blmat_1)\ccos(\blmat_2)\ccos(\blmat_3)}$ & ${\color{blue}\ccos(\blmat_1)}{\color{red}\ssin(\blmat_2)}{\color{blue}\ccos(\blmat_3)}$ & ${\color{red}\ssin(\blmat_1)}{\color{blue}\ccos(\blmat_2)}{\color{red}\ssin(\blmat_3)}$ & ${\color{red}\ssin(\blmat_1)\ssin(\blmat_2)\ssin(\blmat_3)}$\\
            \\
             & ${\color{blue}\ccos(\blmat_1)\ccos(\blmat_2)}{\color{red}\ssin(\blmat_3)}$ & ${\color{red}\ssin(\blmat_1)\ssin(\blmat_2)}{\color{blue}\ccos(\blmat_3)}$
        \end{tabular}
    \end{table}
\noindent
Therefore there are a number of equivalent ways to perform the factorization. We define the matrix normalization procedure as
    \begin{equation}
        \hat{\Mmat}_i := \frac{\Mmat_i}{\norm{\Mmat_i}}, \qquad \norm{\Mmat_i} := \sqrt{\frac{1}{3}\tr\bqty{\Mmat_i \Mmat_i^\dagger}}. \label{eq:mat_normalize}
    \end{equation}
Then, the equivalent ways of calculating e.g. $\Umat_1$ are, up to the normalization of \cref{eq:mat_normalize},
    \begin{align}
        \Umat_1 &\propto \expval{\Umat}_0 + \Sblmat_1 \label{eq:factor_simple} \\
        &\propto \Imat + \Hmat_3 \Sblmat_2^{-1} \propto \Imat + \Hmat_2 \Sblmat_3^{-1} \label{eq:factor_inv} \\
        &\propto \Imat + \expval{\Umat}_6 \Hmat_1^{-1}. \label{eq:factor_inv2}
    \end{align}
The other $\Umat_i$ are obtained in analogous fashion.
In order to maintain the distinction between $\pm \Umat$, it is recommended to calculate only e.g. $\Umat_1$ and $\Umat_2$, after which the last factor follows as $\Umat_3 = \Umat_1^\dagger \Umat_2^\dagger \Umat$.

When $\expval{\Umat}_0 \neq 0$, it is computationally most efficient to use \cref{eq:factor_simple}, as \cref{eq:factor_inv,eq:factor_inv2} come at the cost of an inverse. However, this cost becomes unavoidable when $\expval{\Umat}_0 = 0$.

\subsection{Logarithm of an \SUN{3} element}\label{sec:logarithm_mat}

With the factorization $\Umat = \Umat_1 \Umat_2 \Umat_3$ of \cref{sec:factorization_mat} in hand, the principal logarithm is simply 
    \begin{equation}
        \Ln \Umat = \Ln \Umat_1 + \Ln \Umat_2 + \Ln \Umat_3,
    \end{equation}
where $\Ln \Umat_i$ is given by
    \begin{align}
        \Ln{\Umat_i} &= \widehat{\ssin(\blmat_i)} \arccos(\frac{1}{3} \tr\bqty{\ccos(\blmat_i)}) \\
        &= \hat{\blmat}_i \arccos(\frac{1}{3} \Re \tr\bqty{\Umat_i}).
    \end{align}
The logarithm is by no means unique. The principal logarithm $\Ln \Umat_i$ is one such logarithm, but so is
    \begin{equation}
        \ln \Umat_i = \Ln \Umat_i + 2 \pi k_i \hat{\blmat}_i,
    \end{equation}
where $k_i \in \mathbb{Z}$. It follows that all possible logarithms of $\Umat \in \SUN{3}$ are given by
    \begin{equation}
        \ln \Umat = \sum_{i=1}^3 \Ln \Umat_i + 2 \pi k_i \hat{\blmat}_i.
    \end{equation}
Each of the logarithms $\ln \Umat_i$ behaves just like the complex logarithm of complex analysis \cite{ablowitz_fokas_2003}, and so the theory of complex analysis can be brought to bear on studying their properties. 

\subsection{Decomposition of Gell-Mann matrices}\label{sec:gellmann}
The Gell-Mann matrices are the Hermitian generators of \sun{3}, and play an important role in Quantum Chromodynamics \cite{doi:10.1142/6065,GellMann:1961ky,peskin1995introduction}. It is therefore important to discuss how they decompose under \cref{th:orthogonal_decomposition}. Consider the Gell-Mann matrix $\llambda_1$:
    \begin{equation}
        \llambda_1 = \mqty({0 & 1 & 0 \\ 1 & 0 & 0 \\ 0 & 0 & 0}) = \tfrac{1}{2} \mqty({0 & 1 & 0 \\ 1 & 0 & 0 \\ 0 & 0 & 1}) + \tfrac{1}{2} \mqty({0 & 1 & 0 \\ 1 & 0 & 0 \\ 0 & 0 & -1}).
    \end{equation}
If we define $\rrho_{\pm 1} = \smqty({0 & 1 & 0 \\ 1 & 0 & 0 \\ 0 & 0 & \pm 1})$, then $\rrho_{\pm 1}$ is an invertible Hermitian matrix satisfying $\rrho_{\pm 1}^2 = \Imat$. By repeating this process for the other Gell-Mann matrices we find a total of 15 linearly \emph{dependent} Hermitian matrices, which we arrange with indices ranging from $-7$ to $7$:
    \begin{align}
        \rrho_{\pm a} &= 
        \begin{cases}
            \llambda_a \pm \smqty({0&0&0\\0&0&0\\0&0&1}) & a=1,2,3 \\
            \llambda_a \pm \smqty({0&0&0\\0&1&0\\0&0&0}) & a=4,5 \\
            \llambda_a \pm \smqty({1&0&0\\0&0&0\\0&0&0}) & a=6,7 \\
            \smqty({1&0&0\\0&1&0\\0&0&-1}) & a=0 \\
        \end{cases}.
    \end{align}
The linear combinations of $\rrho_{\pm a}$ build up the Gell-Mann matrices:
    \begin{equation}
        \llambda_a = \begin{cases}
            \tfrac{1}{2} \rrho_{+a} + \tfrac{1}{2} \rrho_{-a} & a=1,2,\ldots, 7 \\
            \tfrac{1}{2 \sqrt{3}} \rrho_{-3} - \tfrac{1}{2 \sqrt{3}} \rrho_{+3} + \frac{1}{\sqrt{3}} \rrho_0 & a =8
        \end{cases}. 
    \end{equation}
Using the invariant decomposition, the exponentials of individual Gell-Mann matrices can easily be calculated. For $\llambda_a$ with $a=1,2,\ldots,7$, we have the exponential factorization
    \begin{align*}
        e^{i \theta \llambda_a} &= e^{i \tfrac{\theta}{2} \rrho_{+a}} e^{i \tfrac{\theta}{2} \rrho_{-a}} \\
        &= \pqty{\Imat \cos(\tfrac{\theta}{2}) + i \rrho_{+a} \sin(\tfrac{\theta}{2})} \pqty{\Imat \cos(\tfrac{\theta}{2}) + i \rrho_{-a} \sin(\tfrac{\theta}{2})} \\
        &= \tfrac{1}{2} (1 + \cos\theta) \Imat - \tfrac{1}{2} (1 - \cos \theta) \rrho_{+a} \rrho_{-a} + i \llambda_a \sin\theta \\
        &= \pqty{\Imat - \llambda_a^2} + \llambda_a^2 \cos \theta + i \llambda_a \sin\theta.
    \end{align*}
This is identical to eq. (7) of \cite{Curtright:2015iba}, as it should.
The $\ccos(i \theta \llambda_a)$ part of these matrices therefore has an equilibrium position 
\[ \pqty{\Imat - \llambda_a^2} = \tfrac{1}{2}\pqty{\Imat - \rrho_{+a} \rrho_{-a}}\] 
when $\theta = \pm \tfrac{\pi}{2}$, about which is being rotated from $\Imat$ at $\theta = 0$, to $- \rrho_{+a} \rrho_{-a}$ when $\theta = \pm \pi$.
The exponential of $\llambda_8$ is most easily calculated from its diagonal form, or using the invariant decomposition and a bit more calculus, as
    \begin{align*}
        e^{i \theta \llambda_a} &= \text{diag}\pqty{e^{i \tfrac{\theta}{\sqrt{3}}}, e^{i \tfrac{\theta}{\sqrt{3}}}, e^{- i \tfrac{2 \theta}{\sqrt{3}}}}.
    \end{align*}
Combined with the identity matrix there are 16 Hermitian matrices $\{ \Imat, \rrho_{\pm a}\}$, which square to $\Imat$, and 16 skew-Hermitian matrices $\{ i \Imat, i \rrho_{\pm a}\}$, which square to $-\Imat$. 
This maps onto the even subalgebra of the geometric algebra $\mathcal{G}(6)$, which C. Doran et al. \cite{LGasSG} proved can be used to describe \SUN{3}. Investigating this link further will be the topic of future research.

\section{Conclusion}
A novel decomposition for $n \times n$ matrices was found. When applying this decomposition to $\Bmat \in \sun{3}$, we found $\Bmat$ could be split into three commuting matrices: $\Bmat = \blmat_1 + \blmat_2 + \blmat_3$,
where each $\blmat_i$ is called \emph{simple}, because its square is $ \lambda_i \Imat$, with $\lambda_i \in \mathbb{R}$.

As the group element $\Umat = \exp\bqty{\Bmat}$ leaves each of the $\blmat_i$ invariant under the transformation $\Umat \blmat_i \Umat^\dagger = \blmat_i$, we named this decomposition the \emph{invariant decomposition} of~$\Bmat$.
We then found that the invariants $\blmat_i$ play an important role in both the exponentials and the logarithms of $\Umat \in \SUN{3}$, as they are both the geometric invariants, and generators, of $\Umat$.

The invariant decomposition offers an easy and intuitive way to perform computations in \SUN{3}, bringing Abelian intuitions into this non-Abelian space. 

\section{Acknowledgements}
The author would like to thank Prof. David Dudal, Prof. Anthony Lasenby, and Steven De Keninck for valuable discussions about this research. Additional gratitude goes to the insights provided by geometric algebra, and \cite{GeometricCalculus,LGasSG} in particular, which were the driving force behind this research.

\printbibliography

\newpage

\end{document}